\title{Shortest Dominating Set Reconfiguration under Token Sliding}
\author{Jan Maty{\'a}{\v s} K{\v r}i{\v s}{\v t}an\footnote{Corresponding author; kristja6@fit.cvut.cz}}{Faculty of Information Technology, Czech Technical University in Prague, Th{\'a}kurova 9, Prague, 160\,00, Czech Republic}{}{https://orcid.org/0000-0001-6657-0020}{}
\author{Jakub Svoboda}{Institute of Science and Technology, Austria}{}{https://orcid.org/0000-0002-1419-3267}{}
\authorrunning{J. M. K{\v r}i{\v s}{\v t}an and J. Svoboda}
\keywords{reconfiguration, dominating set, trees, interval graphs, algorithms}
\crefname{reduction}{reduction}{reductions}
\newcommand{\customlabel}[2]{%
\protected@write \@auxout {}{\string \newlabel {#1}{{#2}{\thepage}{#2}{#1}{}} }%
\hypertarget{#1}{#2}
}
\newcommand{\decprob}[3]{
  \vspace{2mm}
\noindent\fbox{
  \begin{minipage}{0.96\textwidth}
  \textsc{#1}\\
    \textbf{Input:} #2 \\
  \textbf{Output:} #3
  \end{minipage}
  }
  \vspace{2mm}
}
\newcommand{\calO}{\mathcal{O}}
\newcommand{\TS}{\textsc{Token Sliding}\xspace}
\newcommand{\probIS}{\textsc{Independent Set}\xspace}
\newcommand{\probDS}{\textsc{Dominating Set}\xspace}
\newcommand{\succM}[1]{\sigma_M(#1)\xspace}
\newcommand{\succr}[1]{\sigma(#1)\xspace}
\newcommand{\toki}{v}
\newcommand{\toky}{y}
\newcommand{\I}[1]{I(#1)}
\newcommand{\scr}{w\xspace}
\newcommand{\slide}[3]{#1(#2 \rightarrow #3)\xspace}
\newcommand{\reconfg}{\mathcal{R}}
\newcommand{\Supp}[1]{\text{Supp}(#1)}
\begin{document}

\maketitle

\begin{abstract}
In this paper, we present novel algorithms that efficiently compute a shortest reconfiguration sequence between two given dominating sets in trees and interval graphs under the \TS model.
In this problem, a graph is provided along with its two dominating sets, which can be imagined as tokens placed on vertices.
The objective is to find a shortest sequence of dominating sets that transforms one set into the other, with each set in the sequence resulting from sliding a single token in the previous set.
While identifying any sequence has been well studied, our work presents the first polynomial algorithms for this optimization variant in the context of dominating sets.
%
\end{abstract}

\section{Introduction}

Reconfiguration problems arise when the goal is to transform one feasible solution into another through a series of small steps, while ensuring that all intermediate solutions remain feasible.
These problems have been widely studied in the context of graph problems, such as \textsc{Independent Set}~\cite{Lokshtanov_2019, Demaine_2015, Bonamy_2017, Belmonte_2020, Bartier_2021, Bonsma_2014_claw_free}, \textsc{Dominating Set}~\cite{Bonamy_2021, Bousquet_2021, Lokshtanov_2018, Haddadan_2016, Bodlaender2021, Suzuki2014}, \textsc{Shortest Paths}~\cite{Kami_ski_2011, Gajjar_2022}, and \textsc{Coloring}~\cite{Bonsma_2009, Bonamy_2013, Bonsma_2014, Cereceda_2010}.
Reconfiguration problems have also been studied in the context of \textsc{Satisfiability}~\cite{Gopalan_2006, Mouawad_2017}.
See~\cite{Nishimura_2018} for a general survey.

In the case of the \probDS and other graph problems, the most commonly studied reconfiguration rules are \textsc{Token Jumping} and \textsc{Token Sliding}.
The feasible solution can be represented by tokens placed on the vertices of a graph.
Under \textsc{Token Jumping}, tokens can be moved one at a time to any other vertex, while under \textsc{Token Sliding}, tokens can only be moved one at a time to a neighboring vertex.

We focus on the \textsc{Token Sliding} variant, particularly on finding a shortest reconfiguration sequence.
This optimization variant has been extensively studied in the context of reconfiguring solutions for \textsc{Shortest Paths}~\cite{Kami_ski_2011}, \probIS~\cite{Yamada_2021, Hoang_2019}, and \textsc{Satisfiability}~\cite{Mouawad_2017}.

Our main contribution is the presentation of two polynomial algorithms for finding a shortest reconfiguration sequence between dominating sets on trees and on interval graphs.
This is achieved through a novel approach to finding a reconfiguration sequence, which we believe may have applications in the study of other related reconfiguration problems.

Bonamy et al.~\cite{Bonamy_2021} have shown that a reconfiguration sequence between dominating sets under \TS can be found in polynomial time when the input graph is a dually chordal graph, which is class of graphs encompassing trees and interval graphs.
We extend these results by demonstrating that finding a shortest reconfiguration sequence on trees and interval graphs can also be done in polynomial time.
A key observation is that we can match a simple lower bound on the length of the reconfiguration sequence.
We show that in case of dually chordal graphs, such lower bound cannot be matched on some instances and thus our techniques cannot be directly extended for that case.

We provide a brief overview of the known results, along with our contributions.
The lower bound for cases where the reachability problem is PSPACE-hard follows from the fact that the reconfiguration sequence must have superpolynomial length in some instances (unless PSPACE $=$ NP), as otherwise, a reconfiguration sequence would serve as a polynomial-sized proof of reachability.

\begin{table}[h!]
  \centering
  \begin{tabular}{|l|ll|}
    \hline
    Graph class & Decision problem & Optimization variant \\
    \hline
    Trees & P & $\calO(n)$~\textbf{(\Cref{cor:trees})}\\
    Interval graphs & P & $\calO(n^3)$~\textbf{(\Cref{thm:interval-graphs})}\\
    Dually chordal graphs & P & open\\
    Split & PSPACE-complete & $n^{\omega(1)}$\\
    Bipartite & PSPACE-complete & $n^{\omega(1)}$\\
    Planar & PSPACE-complete & $n^{\omega(1)}$\\
    \hline
  \end{tabular}
  \caption{
    Complexities of problems of reconfiguring dominating sets under \TS on various graph classes. The decision problem results are due to~\cite{Bonamy_2021}.
  }\label{tab:complexities}
\end{table}

\section{Preliminaries}
\subparagraph{Graphs and Trees}
Given a graph $G$ and vertex $v$, we denote the set of neighbors of $v$ by $N(v)$; moreover, $N[v] = N(v) \cup \{v\}$.
Given two vertices $v$ and $u$, we denote $d_G(v,u)$ as the distance between $v$ and $u$, that is the number of edges on a shortest path between $v$ and $u$.

Given a rooted tree $T$ rooted at $r$ and vertex $v$, we denote:
the subtree below $v$ as $T[v]$;
the depth of vertex $v$ as $d(v) = d(v, r)$;
the parent of $v$ as $p(v)$.

Let $\succr{u, v}$ be the set of vertices that follow $u$ on a shortest path from $u$ to $v$.
We assume that $\succr{u, u} = \emptyset$.

\subparagraph{Multisets}
Formally, a multiset $H$ of elements from a base set $S$ is defined as a \emph{multiplicity function} $H : S \rightarrow \mathbb{N} \cup \{0\}$.
We define the \emph{support} of $H$ as $\Supp{H} = \{ v \mid \text{$H(v) \geq 1$} \}$.
Let $H$ and $I$ be multisets, then
$H \cap I = \min(H, I)$,
$H \cup I = H + I$,
$H \setminus I = \max(H - I, 0)$,
$H \triangle I = (H \setminus I) \cup (I \setminus H)$.
The cardinality is defined as $|H| = \sum_{v \in S} H(v)$ and $v \in H$ if $v \in \Supp{H}$.
The Cartesian product $H \times I$ is a multiset of the base set $S \times S$ such that $(H \times I)((u, v)) = H(u) \cdot I(v)$ for all $u, v \in S$.

Note that if one of the operands is a set, we can assume that it is a multiset with multiplicities of 1 for all elements in the set.

\subparagraph{Graph problems}
Given a graph $G = (V,E)$, a set $D$ of vertices is dominating if every vertex is either in $D$ or a neighbor of a vertex in $D$.
A multiset $H$ is dominating if $\Supp{H}$ is dominating.
We say that given a set $S$ of vertices, the vertices with a neighbor in $S$ are \emph{dominated} from $S$.

For trees, we solve a more general problem called reconfiguration of hitting sets.
A \emph{hitting set} of a set system $\mathcal{S}$ is a set $H$ such that for each $S \in \mathcal{S}$ it holds $H \cap S \neq \emptyset$.
A multiset $H$ is a hitting set if $\Supp{H}$ is a hitting set.

\subparagraph{Reconfiguration sequence}
Given a graph $G$ a multiset $D$ of its vertices representing the placement of tokens, we denote $\slide{D}{u}{v} = (D \setminus \{u\}) \cup \{v\}$ the multiset resulting from jumping a token on $u$ to $v$ (or sliding a token on $u$ to $v$ if $\{u, v\} \in E(G)$).
Given a graph $G$ and a set $\Pi$ of feasible solutions, we say that a sequence of multisets $D_1, D_2, \dots, D_\ell$ (of length $\ell$) is a \emph{reconfiguration sequence} under \TS between $D_1, D_\ell \in \Pi$ if
\begin{itemize}
  \item $D_i \in \Pi$ for all $1 \leq i \leq \ell$,
  \item $D_{i+1}
  = \slide{D_i}{u}{v}$ such that $v \in V(G)$, $u \in D_i$ and $\{u, v\} \in E(G)$ for all $1 \leq i < \ell$.
\end{itemize}

The sequence can be concisely represented by a sequence of \emph{moves}.
Given a starting multiset $D_s$, moves $(u_1, v_1), \dots, (u_{k-1}, v_{k-1})$ induce sequence $D_1, D_2, \dots, D_k$ such that $D_1 = D_s, D_{i+1} = \slide{D_i}{u_i}{v_i}$ for all $1 \leq i < k$.
This allows us to formally give the main problem of this paper.


\decprob{Shortest reconfiguration of dominating sets under \TS}
{Graph $G = (V, E)$ and two dominating sets $D_s$ and $D_t$.}
{Shortest sequence of moves $(u_1, v_1), \dots, (u_{k-1}, v_{k-1})$ inducing a reconfiguration sequence under \TS between $D_s$ and $D_t$.}

In the case of trees, we design an algorithm that finds a reconfiguration sequence whenever the feasible solutions can be expressed as hitting sets of
a set system $\mathcal{S}$ such that every $S \in \mathcal{S}$ induces a subtree of the input tree $T$.
Several problems can be formulated in terms of such hitting sets.
\begin{itemize}
  \item If $\mathcal{S}$ is the set of all closed neighborhoods of $T$, then the hitting sets of $\mathcal{S}$ are exactly the dominating sets of $T$.
  \item If $\mathcal{S}$ is the set of all edges, then the hitting sets are exactly all vertex covers of $T$.
  \item An instance of an (unrestricted) vertex multicut is equivalent to a hitting set problem with $\mathcal{S}$ being the set of all paths which must be cut.
\end{itemize}
The general problem of reconfiguring hitting sets is as follows.

\decprob{Shortest reconfiguration of hitting sets under \TS}
{Graph $G = (V, E)$ and two hitting sets $H_s$ and $H_t$ of a set system $\mathcal{S} \subseteq 2^{V(T)}$.}
{Shortest sequence of moves $(u_1, v_1), \dots, (u_{k-1}, v_{k-1})$ inducing a reconfiguration sequence under \TS between $H_s$ and $H_t$.}

\subparagraph{Reconfiguration graph}
Given a graph $G$ and an integer $k$, the \emph{reconfiguration graph} $\mathcal{R}(G,k)$ has as vertices all feasible solutions, in our case dominating multisets, of size $k$.
Two vertices are adjacent whenever one can be reached from the other in a single move, i.e. sliding a token.
Note that the shortest reconfiguration of dominating sets under \TS between $D_s$ and $D_t$ is equivalent to finding a shortest path in $\mathcal{R}(G, |D_s|)$ between $D_s$ and $D_t$.
Furthermore, as each move under \TS is reversible, the edges of $\mathcal{R}(G, k)$ are undirected.
Thus, finding a shortest path from $D_s$ to $D_t$ is equivalent to finding a shortest path from $D_t$ to $D_s$.

It follows that for $D_s$ and $D_t$, if $D_s \neq D_t$ and both are in the same connected component of $\reconfg(G, |D_s|)$, then $d_{\reconfg(G, |D_s|)}(D_s, D_t)$ is the minimum number of moves inducing a reconfiguration sequence between $D_s$ and $D_t$.
If $G$ and $|D_s|$ is clear from the context, we consider $\reconfg = \reconfg(G, |D_s|)$.

\subparagraph{Interval graphs}

A graph $G$ is an \emph{interval graph} if each vertex $v$ can be mapped to a different closed interval $I(v)$ on the real line so that ${v, u} \in E$ if and only $I(v) \cap I(u) \neq \emptyset$.
Such a mapping to intervals is called \emph{interval representation}.

We denote the endpoints of an interval $I(v)$ as $\ell(v)$ and $r(v)$ so that $I(v) = [\ell(v), r(v)]$.
It is known that every interval graph has an interval representation with integer endpoints in which no two endpoints coincide.
We assume that is the case throughout this paper, as such an interval representation can be computed in linear time~\cite{Corneil2010}.

We say that interval $I$ is \emph{to the left} of $J$ (or that $J$ is \emph{to the right} of $I$) if $r(I) < \ell(J)$.
Similarly, we say that $I$ is \emph{nested} in $J$ (or that $J$ \emph{contains} $I$) if $\ell(J) < \ell(I)$, $r(I) < r(J)$.
Furthermore, we say that $I$ \emph{left-intersects} $J$ (or that $J$ \emph{right-intersects} $I$) if $\ell(I) < \ell(J) < r(I) < r(J)$.
Note that every pair of intervals is in exactly one of those relations.
We say that two vertices $u$ and $v$ of an interval graph are in a given relationship if their intervals $I(u)$ and $I(v)$ in a fixed interval representation are in the given relationship.
\section{Lower bounds on lengths of reconfiguration sequences}\label{sec:lb}

We can obtain a lower bound on the length of a reconfiguration sequence by dropping the requirement that the tokens induce a feasible solution (such as a dominating set) at each step.
The problem of finding such shortest reconfiguration sequence is polynomial-time solvable by reducing to the minimum-cost matching in bipartite graphs.

Let $G$ be a graph, $D_s, D_t \subseteq V(G)$ be the multisets representing tokens.
Then $M \subseteq D_s \times D_t$ is a \emph{matching} between $D_s$ and $D_t$ if for every $v \in D_s$, there is exactly $D_s(v)$ pairs $(v, \cdot) \in M$ and similarly for every $v \in D_t$, there is exactly $D_t(v)$ pairs $(\cdot, v) \in M$.
Note that $M$ is a multiset and the same pair may be contained in $M$ multiple times.

We say that $u \in D_s$ and $v \in D_t$ are matched in $M$ if $(u, v) \in M$.
We also use $M(u)$ to denote the set of matches of $u$, that is the vertices $v$ such that $(u, v) \in M$.
The \emph{cost} $c(M)$ of the matching $M$ is defined as
\[
  c(M) = \sum_{(u, v) \in M} d_G(u, v) \cdot M(u, v).
\]
We say that a matching has \emph{minimum cost} if its cost is the minimum over all possible matchings between $D_s$ and $D_t$ and denote this cost as $c^*(D_s, D_t)$.
We use $\succM{u}$ to denote the vertices which follow $u$ on some shortest path to some match $M(u) \neq u$.
Formally \[\succM{u} = \bigcup_{v \neq u : (u, v) \in M} \succr{u, v}.\]
We define $M^{-1}$ so that $M^{-1}(v, u) = M(u, v)$ for all $u \in D_s, v \in D_t$.


\begin{lemma}\label{lem:lb}
  Every sequence of moves inducing a reconfiguration sequence between $D_s$ and $D_t$ under \TS has length at least $c^*(D_s, D_t)$.
\end{lemma}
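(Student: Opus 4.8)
The plan is to exploit the fact that under \TS each move displaces exactly one token across a single edge, so the total number of moves decomposes as a sum of per-token travel lengths, each of which is bounded below by a graph distance. First I would make the informal notion of \emph{token identity} precise. Label the $k$ tokens by $\{1, \dots, k\}$ and, for a sequence of moves inducing the configurations $D_1 = D_s, \dots, D_{m+1} = D_t$, I would construct by induction on $i$ a position map $\pi_i \colon \{1, \dots, k\} \to V(G)$ whose multiset image equals $D_i$. The base case sets $\pi_1$ to be any assignment realizing $D_s$. For the inductive step, the $i$-th move $\slide{D_i}{u_i}{v_i}$ slides a single token along the edge $\{u_i, v_i\}$; I pick one label $j$ with $\pi_i(j) = u_i$, set $\pi_{i+1}(j) = v_i$, and leave every other label unchanged. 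This keeps the image equal to $D_{i+1}$ and alters exactly one position, along an edge.

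Next I would read off a matching from the endpoints of these trajectories. For each label $j$, let $s_j = \pi_1(j)$ and $t_j = \pi_{m+1}(j)$, and let $M = \{(s_j, t_j) : 1 \le j \le k\}$, taken as a multiset of pairs. I would then verify that $M$ is a valid matching between $D_s$ and $D_t$ in the sense defined above: since the multiset image of $\pi_1$ is $D_s$, exactly $D_s(v)$ labels start at $v$, producing exactly $D_s(v)$ pairs of the form $(v, \cdot)$, and symmetrically for $D_t$ using $\pi_{m+1}$. Consequently $c(M) \geq c^*(D_s, D_t)$ by minimality of $c^*$.

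Finally I would bound the number of moves from below. The trajectory $\pi_1(j), \pi_2(j), \dots, \pi_{m+1}(j)$ of token $j$ is a walk in $G$ in which consecutive vertices are equal or adjacent, and token $j$ contributes a move to the sequence precisely when its position changes. Hence the number of moves made by token $j$ equals the number of genuine steps of its walk, which is at least $d_G(s_j, t_j)$, the minimum length of any walk from $s_j$ to $t_j$. Because each of the $m$ moves slides exactly one token, summing over all labels yields
\[
  m = \sum_{j=1}^{k} (\text{moves of token } j) \;\geq\; \sum_{j=1}^{k} d_G(s_j, t_j) \;=\; c(M) \;\geq\; c^*(D_s, D_t),
\]
which is the claimed bound. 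I expect the only delicate point to be the bookkeeping of the first paragraph: one must check that the inductively defined $\pi_i$ always has image exactly $D_i$ and that at each step a label $j$ with $\pi_i(j) = u_i$ indeed exists, which holds because $u_i \in D_i$ forces at least one label to sit on $u_i$. Everything afterward reduces to the elementary observation that a walk joining two vertices can be no shorter than their distance.
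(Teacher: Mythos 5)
Your proposal is correct and follows essentially the same approach as the paper: track each token's trajectory, form the matching given by the starting and ending positions, and observe that the number of moves is at least the cost of that matching, hence at least $c^*(D_s, D_t)$. The only difference is presentational — you argue directly and formalize the token-tracking via the position maps $\pi_i$, whereas the paper phrases the same argument as a proof by contradiction and leaves the bookkeeping implicit.
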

\begin{proof}
  Suppose a reconfiguration sequence using fewer than $c^*(D_s, D_t)$ moves exists.
  Let $M$ be a matching between $D_s$ and $D_t$ of minimum cost.
  Then we can track the moves of each token and construct a matching $M'$ between $D_s$ and $D_t$ given by the starting and ending position of each token.
  Note that the cost of each matched pair is at most the length of the path travelled by the given token.
  Thus in total the cost of $M'$ is at most the total number of moves used.
  Hence, we have $c(M', D_s, D_t) < c^*(D_s, D_t)$, a contradiction.
\end{proof}

The following observation shows that in a minimum-cost matching, if a token can be matched with zero cost, we can assume that is the case for all such tokens.

\begin{lemma}\label{obs:nice-matching}
  For graph $G$, let $D_s, D_t$ be multisets of the same size and let $I = D_s \cap D_t$.
  Then there exists minimum-cost matching $M$ in $G$ between $D_s$ and $D_t$ such that for every $v \in I$ we have $M(v,v) = I(v)$.
\end{lemma}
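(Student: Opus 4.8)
The plan is to begin with an arbitrary minimum-cost matching and repeatedly apply a local exchange that converts a ``detour'' through $v$ into a self-match at $v$, using the triangle inequality to guarantee the cost never increases and a potential argument to guarantee the process terminates.

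First I would record a trivial upper bound: since $M(v,v)$ counts tokens matched from the copies of $v$ in $D_s$ to the copies of $v$ in $D_t$, and there are only $D_s(v)$ outgoing and $D_t(v)$ incoming tokens at $v$, we always have $M(v,v) \le \min(D_s(v), D_t(v)) = I(v)$. Consequently it suffices to exhibit a minimum-cost matching with $M(v,v) \ge I(v)$ for every $v$, since equality then follows automatically (and for $v \notin \Supp{I}$ both sides are $0$).

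Next, fix any minimum-cost matching $M$ and suppose $M(v,v) < I(v)$ for some $v \in I$. From $D_s(v) - M(v,v) > D_s(v) - I(v) \ge 0$ and the analogous inequality for $D_t$, there must exist a pair $(v, u) \in M$ with $u \ne v$ and a pair $(w, v) \in M$ with $w \ne v$. I would then replace these two pairs by $(v,v)$ and $(w,u)$. A direct count of the pairs incident to each endpoint shows the number of outgoing pairs at $v$ and at $w$, and the number of incoming pairs at $u$ and at $v$, are all unchanged, so the result is again a valid matching between $D_s$ and $D_t$. The total cost changes by $d_G(w,u) - d_G(v,u) - d_G(w,v)$, which is at most $0$ by the triangle inequality; since $M$ had minimum cost, the new matching is minimum-cost as well.

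Finally, for termination I would track the potential $\Phi(M) = \sum_{v} M(v,v)$, the total number of self-matched tokens. Each exchange strictly increases $\Phi$: it always creates the self-match $(v,v)$, and the added pair $(w,u)$ either contributes new off-diagonal mass or, in the case $w = u$, a second self-match. As $\Phi$ is bounded above by $|D_s|$, only finitely many exchanges are possible, and when none remains, no vertex satisfies $M(v,v) < I(v)$, giving $M(v,v) = I(v)$ for all $v \in I$. The argument is essentially routine; the only points demanding care are verifying that the two off-diagonal pairs are genuinely available whenever $M(v,v) < I(v)$ (which is exactly the multiplicity bookkeeping above) and confirming that the single swap preserves the matching constraints on all four affected vertices, so I do not anticipate a substantive obstacle beyond this accounting.
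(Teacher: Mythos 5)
Your proposal is correct and follows essentially the same route as the paper: take a minimum-cost matching with $M(v,v) < I(v)$, find an outgoing pair $(v,u)$ and an incoming pair $(w,v)$ with $u,w \neq v$, exchange them for $(v,v)$ and $(w,u)$, invoke the triangle inequality to see the cost does not increase, and iterate until every $v \in I$ is fully self-matched. Your explicit potential $\Phi(M) = \sum_v M(v,v)$ just makes the paper's ``by repeated application'' termination step more formal.
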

\begin{proof}
  Given a minimum matching $M$ between $D_s$ and $D_t$ and $v$ such that $M(v, v) < I(v)$, we show that we can produce $M'$ of the same cost such that $\sum_{u \in I} M'(u, u) > \sum_{u \in I} M(u, u)$.
  Note that there exist $(x, v), (v, y) \in M$ with $x \neq v, y \neq v$ as otherwise $M( (v, v) ) = I(v)$.
  Then we define $M' = (M \setminus \{(x,v),(v,y)\}) \cup \{ (v,v), (x,y)\}$.

  We have $c(M') - c(M) = - d_G(x, v) - d_G(v, y) + d_G(v, v) + d_G(x, y) = -d_G(x, v) - d_G(v, y) + d_G(x, y)$.
  From the triangle inequality $d_G(x,y) \le d_G(x,v) + d_G(v,y)$, thus we have that the cost of $M'$ is at most the cost of $M$ and thus is minimum.
  By repeated application, we arrive at minimum-cost matching $M^*$ with $M^*(v, v) = I(v)$ for all $v$.
\end{proof}

The following observation shows that, given $D_s$ and $D_t$, if we pick a token in $D_s$ and slide it along an edge to decrease its distance to its match in a minimum-cost matching, the resulting $D'_s$ and $D_t$ have minimum cost of matching of exactly one less than $D_s$ and $D_t$.
Thus if each move in the reconfiguration sequence is of such a kind, the length of the resulting sequence will match the lower bound of \Cref{lem:lb}.

\begin{lemma}\label{obs:matching-move}
  Let $M^*$ be a minimum-cost matching between $D_s$ and $D_t, (u, g) \in M^*$ and $v \in \succr{u, g}$ a vertex that follows $u$ on a shortest path from $u$ to $g$.
  Furthermore, let
  \[
    M = \bigl( M^* \setminus \{ (u, g) \} \bigr) \cup \{(v, g)\}.
  \]
  Then $M$ is a minimum-cost matching between $\slide{D_s}{u}{v}$ and $D_t$.
  Furthermore, $c^*(D_s, D_t) = c^*(\slide{D_s}{u}{v}, D_t) + 1$.
\end{lemma}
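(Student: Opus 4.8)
The plan is to establish two things: that $M$ is a genuine matching between $\slide{D_s}{u}{v}$ and $D_t$ of cost exactly $c^*(D_s,D_t)-1$, and that no matching between these two multisets can do better. First I would check validity: passing from $D_s$ to $\slide{D_s}{u}{v}$ removes one token from $u$ and adds one to $v$ while leaving $D_t$ fixed, which is exactly the effect of swapping the pair $(u,g)$ for $(v,g)$ on the $D_s$-side of $M^*$, so the degree constraints defining a matching are preserved. Since $v \in \succr{u,g}$ lies one step closer to $g$ along a shortest path from $u$ to $g$, we have $d_G(v,g) = d_G(u,g) - 1$, whence $c(M) = c(M^*) - d_G(u,g) + d_G(v,g) = c^*(D_s,D_t) - 1$.

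It then remains to prove that $M$ has minimum cost, i.e.\ that $c^*(\slide{D_s}{u}{v}, D_t) = c(M)$; together with the computation above this gives the claimed identity $c^*(D_s,D_t) = c^*(\slide{D_s}{u}{v}, D_t) + 1$. I would argue by contradiction, assuming a matching $M'$ between $\slide{D_s}{u}{v}$ and $D_t$ with $c(M') < c(M) = c^*(D_s,D_t) - 1$. Because $\slide{D_s}{u}{v}$ places at least one token on $v$, some pair $(v,h) \in M'$ exists; transporting that single token back to $u$ yields $\tilde M = (M' \setminus \{(v,h)\}) \cup \{(u,h)\}$, which one checks is a valid matching between $D_s$ and $D_t$.

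The triangle inequality now gives $d_G(u,h) \le d_G(u,v) + d_G(v,h) = 1 + d_G(v,h)$, so $c(\tilde M) = c(M') - d_G(v,h) + d_G(u,h) \le c(M') + 1 < c^*(D_s,D_t)$. Since $\tilde M$ matches $D_s$ with $D_t$, this contradicts the definition of $c^*(D_s,D_t)$ as a minimum, so no such $M'$ exists and $M$ is minimum-cost. The one point needing genuine care is the multiset bookkeeping---confirming that the single-pair swaps respect multiplicities at $u$, $v$, and $h$---rather than any subtle idea; the real content is just the two uses of distance: the shortest-path identity, which makes $c(M) = c^*(D_s,D_t) - 1$ an exact equality, and the triangle inequality bounding the reverse transport.
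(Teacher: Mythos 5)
Your proposal is correct and follows essentially the same route as the paper's proof: an exact cost computation $c(M)=c(M^*)-1$ via the shortest-path step, followed by the reverse-transport argument (moving a token matched at $v$ back to $u$ and applying the triangle inequality) to rule out any cheaper matching. The extra care you take with the multiset degree constraints is a welcome addition but does not change the argument.
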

\begin{proof}
  From definition $c(M^*) - c(M) = d_G(u,g) - d_G(v,g)$, but $v$ is the vertex on the path from $u$ to $g$,
  so $d_G(u,g) - d_G(v,g) = 1$ and $c^*(\slide{D_s}{u}{v}, D_t) \le c^*(D_s,D_t) - 1$.

  Suppose that $c^*(\slide{D_s}{u}{v}, D_t) < c^*(D_s,D_t) - 1$, i.e., there exists matching $M'$ between $\slide{D_s}{u}{v}$ and $D_t$ such that $c(M') < c(M)$.
  From $M'$, we construct a matching $M''$ between $D_s$ and $D_t$ such that $c(M'') < c(M^*)$, which is a contradiction.

  Let $x \in M'(v)$ and set $M'' = (M' \setminus \{(v,x)\}) \cup \{(u,x)\}$.
  The cost $c(M'') \le c(M') + 1$, since the distance between $v$ and $u$ is $1$.
  That means if $c(M') < c(M)$, then $c^*(D_s,D_t) < c(M^*)$, but $M^*$ is minimum-cost matching.

  Therefore, $M$ is a minimum-cost matching between $\slide{D_s}{u}{v}$ and $D_t$ and $c^*(D_s, D_t) = c^*(\slide{D_s}{u}{v}, D_t) + 1$.
\end{proof}

\section{Algorithms for finding a shortest reconfiguration sequence}

In the following sections, we present algorithms for finding a shortest reconfiguration sequence between dominating sets on trees and interval graphs under \TS.


\subsection{Trees}
We present an algorithm that, given a tree $T$ and two hitting sets $H_s, H_t$ of a set system $\mathcal{S}$ such that every $S \in \mathcal{S}$ induces a subtree of $T$, finds a shortest reconfiguration sequence between $H_s$ and $H_t$ under \TS.
As dominating sets are exactly the hitting sets of closed neighborhoods, the algorithm finds a shortest reconfiguration sequence between two dominating sets.
Note that $\mathcal{S}$ need not be provided on the input.

Consider the reconfiguration graph $\mathcal{R}(G, |H_s|)$, whose vertices are the all the hitting multisets of $\mathcal{S}$ of size $|H_s|$.
The high-level idea is to extend two paths in $\mathcal{R}(G, |H_s|)$, one from $H_s$ and another from $H_t$, until they reach a common configuration.
We repeatedly identify a subtree $T[v]$ of the rooted $T$ for which the configurations $H_s$ and $H_t$ are identical, except for $v$ itself.
Then, we modify either $H_s$ or $H_t$ by sliding the token (or tokens) on $v$ to its parent, ensuring that $H_s$ and $H_t$ become equal when restricted to $T[v]$.

\Cref{alg:trees} describes the algorithm.
We assume that the input tree is rooted in some vertex $r$.


\begin{algorithm}
  \caption{Reconfiguration of hitting sets in trees}
  \begin{algorithmic}[1]
    \Procedure{ReconfTree}{$T, H_s, H_t$}\label{alg:trees}
    \If {$H_s = H_t$}
      \Return $\emptyset$
    \EndIf
    \State $v \gets$ vertex $v$ such that $H_s(v) \neq H_t(v)$ and $H_s(u) = H_t(u)$ for all $u \in T(v)$.
    \If {$H_s(v) > H_t(v)$}
      \State \Return $(v, p(v))$ + \Call{ReconfTree}{$T, \slide{H_s}{v}{p(v)}, H_t$}
    \Else
      \State \Return \Call{ReconfTree}{$T, H_s, \slide{H_t}{v}{p(v)}$} + $(p(v), v)$
    \EndIf
    \EndProcedure
  \end{algorithmic}
\end{algorithm}


The proof of correctness uses techniques of \Cref{sec:lb}.
While the correctness of the algorithm can be proved without them, we believe this presentation is helpful for understanding the proofs in subsequent sections.

\begin{theorem}\label{thm:tree-reconf}
  Let $T$ be a tree on $n$ vertices and $H_s$ and $H_t$ hitting sets of a set system $\mathcal{S}$ in which every $S \in \mathcal{S}$ induces a subtree of $T$.
  Then \Call{ReconfTree}{} (\Cref{alg:trees}) correctly computes a solution to \textsc{Shortest reconfiguration of hitting sets under \TS}.
  Furthermore, it runs in time $\mathcal{O}(n)$.
\end{theorem}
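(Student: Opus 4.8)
The plan is to prove both claims of the theorem by induction along the recursion of \Call{ReconfTree}{}, using the quantity $c^*(H_s,H_t)$ as the potential that controls optimality. Concretely, I would show that the algorithm emits a valid reconfiguration sequence — every intermediate multiset is a hitting set of $\mathcal{S}$ — whose length is exactly $c^*(H_s,H_t)$; since \Cref{lem:lb} guarantees that no sequence can be shorter, this simultaneously yields correctness and optimality. The base case $H_s=H_t$ is immediate (empty output, $c^*=0$). For the inductive step I first note that the chosen vertex $v$ always exists and satisfies $v\neq r$: any deepest vertex on which $H_s$ and $H_t$ disagree makes all of its proper descendants agree, and $v=r$ is impossible because $|H_s|=|H_t|$ would force agreement at the root once every other vertex agrees. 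By reversibility of \TS and $c^*(H_s,H_t)=c^*(H_t,H_s)$ it suffices to treat the case $H_s(v)>H_t(v)$, the other branch being symmetric with the roles of $H_s$ and $H_t$ exchanged and the single move appended rather than prepended.

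The first real obstacle is \emph{feasibility}: I must show that the set $H_s' = \slide{H_s}{v}{p(v)}$ is again a hitting set, and this is exactly where the subtree hypothesis on $\mathcal{S}$ is consumed. Fix $S\in\mathcal{S}$. If $v\notin S$, sliding a token off $v$ cannot destroy coverage of $S$, and adding a token at $p(v)$ only helps. If $v\in S$ and $p(v)\in S$, the moved token lands inside $S$. The delicate case is $v\in S$ but $p(v)\notin S$: since $S$ induces a connected subtree containing $v$ but not its parent, and every path leaving $T[v]$ passes through $p(v)$, we must have $S\subseteq T[v]$. Now I invoke the agreement $H_s(u)=H_t(u)$ for all $u\in T(v)$ together with the fact that $H_t$ is a hitting set: some $u_0\in S\subseteq T[v]$ carries an $H_t$-token, and if $u_0\neq v$ then $H_s'(u_0)=H_s(u_0)=H_t(u_0)\ge 1$, while if $u_0=v$ then $H_s(v)>H_t(v)\ge 1$ leaves $H_s'(v)=H_s(v)-1\ge 1$. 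Either way $S$ stays hit, so $H_s'$ is a hitting set and $(v,p(v))$ is a legal slide.

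The second step is \emph{progress}, namely $c^*(H_s',H_t)=c^*(H_s,H_t)-1$, which is where the matching lemmas enter. By \Cref{obs:nice-matching} I may fix a minimum-cost matching $M^*$ between $H_s$ and $H_t$ in which every common token is matched to itself. Because $H_s(v)>H_t(v)$, the token count forces some excess token at $v$ to be matched to a vertex $g\neq v$. I claim $g\notin T[v]$: if $g$ were a proper descendant of $v$ then $g\in T(v)$ gives $H_s(g)=H_t(g)$, so all target slots at $g$ are already consumed by the self-match, contradicting $(v,g)\in M^*$. As $g$ lies outside $T[v]$, the shortest $v$--$g$ path begins with $p(v)$, so $p(v)\in\succr{v,g}$; \Cref{obs:matching-move} then yields $c^*(H_s,H_t)=c^*(\slide{H_s}{v}{p(v)},H_t)+1$. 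Combining feasibility with this identity and the inductive hypothesis, the prepended move $(v,p(v))$ extends a shortest sequence of length $c^*(H_s',H_t)$ into one of length $c^*(H_s,H_t)$, matching the lower bound of \Cref{lem:lb}; the symmetric branch is handled identically.

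Finally, for the running time I would implement the recursion as a single post-order traversal: processing children before parents guarantees that when a vertex is visited all of its descendants already agree, so the discrepancy $H_s(v)-H_t(v)$ can be read and resolved with $O(1)$ bookkeeping per vertex (besides emitting the individual slides), pushing any surplus to $p(v)$ to be reconciled when the parent is processed. This gives the $O(n)$ bound on the control logic, with the slides produced during the traversal. The part I expect to be most subtle is the feasibility case analysis — in particular isolating that a set hit only by the moving token must be confined to $T[v]$ and is therefore rescued by the agreement with $H_t$ on the subtree — since this is precisely the place where the structural assumption that every $S$ induces a subtree does the work.
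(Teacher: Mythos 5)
Your proof is correct and follows essentially the same route as the paper: the same subtree-confinement argument for feasibility of the slide to $p(v)$ (the paper phrases it contrapositively via a witness $y \in S \cap H_t$, but the content is identical), the same induction on $c^*(H_s,H_t)$ using \Cref{obs:nice-matching} and \Cref{obs:matching-move} to meet the lower bound of \Cref{lem:lb}, and the same bottom-up traversal for the running time. The only loose end is your parenthetical about ``emitting the individual slides'': since the number of moves can be $\Omega(n^2)$, the $\mathcal{O}(n)$ bound requires outputting the $|H_s(v)-H_t(v)|$ parallel slides at $v$ as a single compressed triple, as the paper does.
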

\begin{proof}
  \begin{figure}
    \centering
    \subfloat[]{
      \includegraphics[width=0.25\textwidth, page=1]{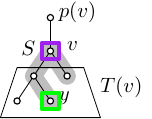}
      \label{fig:tree1}
    }
    \subfloat[]{
      \includegraphics[width=0.25\textwidth, page=2]{images/tree}
      \label{fig:tree2}
    }
    \caption{Illustrations accompanying the proof of \Cref{thm:tree-reconf}.
    The green squares denote tokens of $H_s$, the purple squares denote tokens of $H_t$.
    The grey areas show examples of $S$ in the two considered cases.}
    \label{fig:tree}
  \end{figure}
  We will show that \Call{ReconfTree}{} outputs a sequence of $d_\mathcal{R}(H_s, H_t)$ moves which induces a reconfiguration sequence between the two hitting sets $H_s, H_t$ of $\mathcal{S}$.
  If $H_s = H_t$, then $d_\mathcal{R}(H_s, H_t) = 0$ and the procedure correctly outputs an empty sequence.
  Thus assume that $H_s \neq H_t$.

    Suppose that $T$ is rooted in $r$ and let $v$ be a vertex such that $H_s(v ) \neq H_t(v)$ and $H_s(u) = H_t(u)$ for all $u \in T(v)$.
    Without loss of generality, assume that $H_s(v) > H_t(v)$ as otherwise, we can swap $H_s$ and $H_t$.

  \begin{claim}
    $\slide{H_s}{v}{p(v)}$ is a hitting set of $\mathcal{S}$.
  \end{claim}
  \begin{claimproof}
  Suppose that $H'_s = \slide{H_s}{v}{p(v)}$ is not a hitting set of $\mathcal{S}$.
    It follows that $\Supp{H_s}~\nsubseteq~\Supp{H'_s}$ and therefore $H_s(v) = 1$ and $H'_s(v) = 0$ and $H'_s$ is not intersecting only sets $S \in \mathcal{S}$ such that $v \in S$ and $p(v) \notin S$.
    Furthermore, $H_t(v) = 0$ as $H_t(v) < H_s(v)$.

  Let $S \in \mathcal{S}$ be a set not intersecting $H'_s$ and let $y \in S \cap H_t$.
  Such $y$ distinct from $v$ must exist as $H_t$ is a hitting set of $\mathcal{S}$ and $v \notin H_t$.
    If $y \in T[v]$, then $y \in H_s$ as $H_s(y) = H_t(y)$ by the choice of $v$, which contradicts $H'_s$ not intersecting $S$.
  This case is shown in \Cref{fig:tree1}.

  Therefore $y \in T \setminus T[v]$.
  Note that the path connecting $v$ with $y$ must visit $p(v)$, thus as $S$ induces a subtree and contains $u$ and $y$, it contains $p(v)$ as well and therefore $S$ intersects $H'_s$.
  This case is shown in \Cref{fig:tree2}.
  \end{claimproof}

\begin{claim}
  The number of moves outputted by \Call{ReconfTree}{$T, H_s, H_t$} is equal to $d_\mathcal{R}(H_s, H_t)$.
\end{claim}
\begin{claimproof}
  We first claim that if $H_s, H_t$ are two hitting sets of $\mathcal{S}$ with the same size, then $d_{\mathcal{R}}(H_s, H_t) = c^*(H_s, H_t)$.
  Furthermore, we show that a move from $v$ to $p(v)$ decreases the cost of a minimum-cost matching between $H_s$ and $H_t$ by one, where $v$ is a vertex such that $H_s(v ) \neq H_t(v)$ and $H_s(u) = H_t(u)$ for all $u \in T(v)$.
  This together implies that each outputted move decreases the distance in the reconfiguration graph by one.

  We prove the claim by induction on $c^*(H_s, H_t)$ that $d_{\mathcal{R}}(H_s, H_t) = c^*(H_s, H_t)$ for any hitting sets $H_s, H_t$ of the same size.
  First note that $c^*(H_s, H_t) = 0$ if and only if $H_s = H_t$.
  Now, suppose that $c^*(H_s, H_t) \geq 1$.

  Let $M^*$ be a minimum-cost matching between $H_s$ and $H_t$ such that tokens with distance $0$ are matched to each other, such matching exists by \Cref{obs:nice-matching}.

  Let $H'_s = \slide{H_s}{v}{p(v)}$.
  As all tokens in $T(v)$ are matched by $M^*$ only to the same vertex, it holds $M^*(v) \subseteq V \setminus T[v]$.
  Therefore $p(v)$ is the next vertex on the path from $v$ to some $g \in M^*(v)$ and thus by \Cref{obs:matching-move} it holds $c^*(H_s', H_t) = c^*(H_s, H_t) - 1$.
  As $H'_s$ is a hitting set of $\mathcal{S}$ by the previous claim, it follows from the induction hypothesis that $d_\mathcal{R}(H'_s, H_t) = c^*(H_s', H_t)$.
  Now, note that $d_\mathcal{R}(H_s, H_t) \geq c^*(H_s, H_t)$ by \Cref{lem:lb}.
  On the other hand,
  \[
    d_\mathcal{R}(H_s, H_t) \leq d_\mathcal{R}(H'_s, H_t) + 1 = c^*(H'_s, H_t) + 1 = c^*(H_s, H_t)
  \] as $H_s$ can be reached from $H'_s$ by a single token slide.
  This concludes the proof of the inductive step.

  As $d_\mathcal{R}(H'_s, H_t) = d_\mathcal{R}(H_s, H_t) - 1$, each call of the algorithm decreases the distance between the hitting sets by one and also outputs one move.
  Thus the resulting reconfiguration sequence is shortest possible.
\end{claimproof}

We now describe how to implement \Cref{alg:trees} so that it achieves the linear running time.
Note that we assume that the input $H_s$ and $H_t$ of the initial call of $\Call{ReconfTree}{}$ are subsets of $V(T)$ and therefore $|H_s|, |H_t| \leq n$.
Next, we show how to compress the output to $\mathcal{O}(n)$ size.
Whenever $|H_s(v) - H_t(v)| > 1$, we can perform all $|H_s(v) - H_t(v)|$ moves from $v$ to $p(v)$ at once and output them as a triple $(v, p(v), |H_s(v) - H_t(v)|)$ if $H_s(v) > H_t(v)$ or $(p(v), v, |H_s(v) - H_t(v)|)$ in case $H_s(v) < H_t(v)$.

Note that with this optimization, the vertex $v$ on line $3$ is distinct for each call of $\Call{ReconfTree}{}$.
Furthermore, we can fix in advance the order in which we pick candidates of $v$ on line $3$ by ordering the vertices of $T$ by their distance from $r$ in decreasing order.
This is correct as the depth of the lowest vertex satisfying the condition of line $3$ cannot increase in the subsequent calls.
Then, the process of finding $v$ on line $3$ has total runtime of $\mathcal{O}(n)$ over the course of the whole algorithm.
\end{proof}

\begin{corollary}\label{cor:trees}
  Let $T$ be a tree on $n$ vertices and $D_s, D_t$ dominating sets of $T$ such that $|D_s| = |D_t|$.
  \Cref{alg:trees} finds a shortest reconfiguration sequence between $D_s$ and $D_t$ under \TS in $\mathcal{O}(n)$ time.
\end{corollary}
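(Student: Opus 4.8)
The plan is to derive this as a direct specialization of \Cref{thm:tree-reconf}, whose hypotheses concern hitting sets of a set system in which each set induces a subtree. First I would fix the set system $\mathcal{S} = \{ N[v] : v \in V(T) \}$ consisting of all closed neighborhoods of $T$. By definition, a set $D \subseteq V(T)$ is dominating exactly when every vertex lies in $D$ or is adjacent to a vertex of $D$, which is precisely the condition that $D \cap N[v] \neq \emptyset$ for every $v$; hence the dominating sets of $T$ are exactly the hitting sets of $\mathcal{S}$, and the same equivalence lifts to multisets via their supports as already recorded in the preliminaries.

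The one structural point that must be checked is that every $S \in \mathcal{S}$ induces a subtree of $T$. This is immediate: for any vertex $v$, the induced subgraph on $N[v]$ consists of $v$ together with all its neighbors and the edges joining $v$ to them, i.e.\ a star $K_{1,\deg(v)}$. A star is connected and acyclic, so $N[v]$ induces a subtree, and the hypothesis of \Cref{thm:tree-reconf} is satisfied for this choice of $\mathcal{S}$.

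With these two observations in hand, I would simply apply \Cref{thm:tree-reconf} with $H_s = D_s$ and $H_t = D_t$, which have equal size by assumption. The theorem then guarantees that \Cref{alg:trees} outputs a sequence of $d_{\mathcal{R}}(D_s, D_t)$ moves inducing a reconfiguration sequence between $D_s$ and $D_t$, i.e.\ a shortest one, and that it runs in time $\mathcal{O}(n)$. It is worth emphasizing that the algorithm never references $\mathcal{S}$ explicitly: line $3$ only inspects the multiplicities $H_s(u)$ and $H_t(u)$ on the subtree below a candidate vertex, so no closed neighborhoods need to be enumerated, and the linear running time of \Cref{thm:tree-reconf} carries over verbatim.

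There is essentially no obstacle here beyond recognizing the subtree property of closed neighborhoods; this is the sole place where the tree structure is genuinely used, and it is exactly where the argument would break down on more general graph classes, since a closed neighborhood need not induce a subtree once the graph contains cycles. The remainder is a bookkeeping matter of matching the hypotheses of the general hitting-set theorem.
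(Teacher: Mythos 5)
Your proposal is correct and is exactly the derivation the paper intends: the corollary is stated without a separate proof because it follows from \Cref{thm:tree-reconf} by taking $\mathcal{S}$ to be the family of closed neighborhoods, as already noted in the preliminaries and in the paragraph preceding the theorem. Your additional check that each $N[v]$ induces a star (hence a subtree, since neighbors of $v$ cannot be adjacent in a tree) is the right and only nontrivial verification.
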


In general, the length of the reconfiguration sequence can be up to $\Omega(n^2)$,
for instance when $\Omega(n)$ tokens are required to move from one end of a path to the other end, as each must move to a distance of at least $\Omega(n)$.
However, when this happens, a lot of tokens move by one edge and we can move them at the same time, so the running time of the algorithm can be smaller than the number of moves.



\subsection{Interval graphs}

In this section, we describe a polynomial-time algorithm for finding a shortest reconfiguration sequence between two dominating sets under the \TS model in interval graphs.
As with trees, we demonstrate that the distance between two dominating sets in interval graphs is equal to the lower bound established in \Cref{lem:lb}.
Our approach involves a minimum-cost matching between the dominating sets $D_s$ and $D_t$ to identify a valid move.
The key insight of this algorithm is that we can always recalculate the minimum-cost matching to enable sliding at least one token along a shortest path towards its corresponding match.

The following pseudocode outlines the algorithm.
A minimum-cost matching $M$ between $D_s$ and $D_t$ is assumed to be provided on the input.

\begin{algorithm}[H]
  \caption{Reconfiguration of dominating sets in interval graphs}\label{alg:intervals}
  \begin{algorithmic}[1]
    \Procedure{ReconfIG}{$G, D_s, D_t, M$}
    \If {$D_s = D_t$}
    \Return $\emptyset$
    \EndIf
    \If {$ \exists (u, v) \in M, u' \in \succr{u, v}$ such that $\slide{D_s}{u}{u'}$ is dominating}
      \State \Return $(u \rightarrow u')$ + \Call{ReconfIG}{$G, \slide{D_s}{u}{u'}, D_t$}
    \EndIf
    \If {$ \exists (u, v) \in M, v' \in \succr{v, u}$ such that $\slide{D_t}{v}{v'}$ is dominating}
      \State \Return \Call{ReconfIG}{$G, D_s, \slide{D_t}{v}{v'}$} + $(v' \rightarrow v)$
    \EndIf
    \State $M' \gets$ \Call{FixMatching}{$G, D_s, D_t, M$}\label{line:fix-matching}
    \State \Return \Call{ReconfIG}{$G, D_s, D_t, M'$}
    \EndProcedure

    \Procedure{FixMatching}{$G, D_s, D_t, M$}
    \State $v \in D_s \mathbin{\triangle} D_t$ with minimum possible $r(v)$.
    \If {$v \in D_t$}
      \State \Return \Call{FixMatching}{$G, D_t, D_s, M^{-1}$}$^{-1}$ \Comment{Symmetric solution, swap $D_s, D_t$}
    \EndIf
    \State Find $y \in D_t \setminus D_s, y' \in M(y), v' \in M(v)$ such that $D'_s = \slide{D_s}{v}{y}$ is dominating and\label{line:find-y}
    $M' = M \setminus \{(v, v'), (y', y)\} \cup \{(v, y), (y', v')\}$ is a minimum-cost matching\\between $D_s$ and $D_t$.
    \State \Return $M'$
    \EndProcedure
  \end{algorithmic}
\end{algorithm}

The bulk of the proof consists of showing that the procedure \textsc{FixMatching} is correct, in particular that the call on line \ref{line:find-y} succeeds.
First, we present a technical observation related to shortest paths in interval graphs.

\begin{observation}\label{obs:sp}
  Let $P = (v_1, v_2, \dots, v_k)$ be a shortest path between $v_1$ and $v_k$ in an interval graph with $r(v_1) < r(v_k)$ and $k \geq 3$.
  Then $v_{i+1}$ right-intersects $v_i$ and $v_{i+2}$ does not intersect $v_i$ for all $i \in \{1, \dots, k - 2\}$.
\end{observation}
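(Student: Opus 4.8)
The plan is to isolate the single structural fact that drives everything: a \emph{shortest} path is induced, so $v_a$ and $v_b$ are non-adjacent whenever $|a-b|\ge 2$ (a chord would give a strictly shorter walk from $v_1$ to $v_k$). In an interval graph non-adjacency is the same as interval-disjointness, so $I(v_i)$ and $I(v_{i+2})$ are disjoint for every $i\le k-2$; this settles the second assertion outright. The same remark shows that consecutive intervals $I(v_i),I(v_{i+1})$ \emph{do} overlap, so they stand in exactly one of the overlapping relations of the preliminaries: $v_{i+1}$ right-intersects $v_i$, $v_{i+1}$ left-intersects $v_i$, or one of the two is nested in the other. The whole task is then to rule out every possibility except right-intersection.

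First I would eliminate the nestings using the disjoint second neighbours. If $v_{i+1}$ were nested in $v_i$, then $I(v_{i+1})\subseteq I(v_i)$, so any interval meeting $I(v_{i+1})$ — in particular $I(v_{i+2})$ — would meet $I(v_i)$, contradicting the disjointness above; this holds for every $i\le k-2$. Symmetrically, if $v_i$ were nested in $v_{i+1}$ with $i\ge 2$, then $I(v_{i-1})$ would reach $I(v_{i+1})$ through the enclosed $I(v_i)$, contradicting that $v_{i-1}$ and $v_{i+1}$ are at index-distance two. Hence for $2\le i\le k-2$ the pair $v_i,v_{i+1}$ is left- or right-intersecting, and only at the boundary index $i=1$ can a nesting ($v_1$ inside $v_2$) still survive this local reasoning.

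Next I would fix the orientation, which is the only place the hypothesis $r(v_1)<r(v_k)$ is used. The key sublemma is that the path cannot reverse direction: if $v_{i+1}$ right-intersects $v_i$ then $v_{i+2}$ must lie to the right of $v_i$, and a short case check shows that a right-intersecting step followed by a left-intersecting one (or vice versa) forces $I(v_i)$ and $I(v_{i+2})$ to meet, which is impossible. Thus the intersecting steps along the interior all point the same way; since an entirely leftward run would give $r(v_k)<r(v_1)$, the hypothesis forces them all to be rightward, so $r$ and $\ell$ both increase. Combined with the exclusion of nestings, this yields $\ell(v_i)<\ell(v_{i+1})<r(v_i)<r(v_{i+1})$, precisely that $v_{i+1}$ right-intersects $v_i$, at every interior index.

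I expect the \emph{endpoints} to be the real obstacle. Away from the ends each interval is squeezed between two mutually disjoint neighbours, which rigidly pins its relation to either of them; but $v_1$ has no predecessor $v_0$, so the argument of the second paragraph cannot by itself forbid $v_1$ from sitting nested inside $v_2$. Excluding this degenerate endpoint configuration is exactly where the global inequality $r(v_1)<r(v_k)$ must be leveraged (or, failing that, an additional normalization of the chosen shortest path, for instance taking $v_1$ with smallest possible left endpoint among valid first vertices), and converting that single global comparison into a clean per-pair invariant is, I expect, the crux of the proof.
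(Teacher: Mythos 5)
Your proof of the second assertion (if $v_i$ and $v_{i+2}$ intersected, $v_{i+1}$ could be removed) is exactly the paper's. For the first assertion you take a slightly different but equivalent route on the interior of the path: the paper assumes $r(v_{i+1})<r(v_i)$, deduces that $v_{i+1}$ left-intersects $v_i$ (interior nestings being removable), and derives a contradiction by considering the first later vertex $v_j$ with $r(v_j)>r(v_i)$ and arguing that the intervals in between must cover the gap $[r(v_i),\ell(v_j)]$, forcing $v_j$ to intersect $v_i$. Your ``no direction reversal'' argument (a right-intersecting step followed by a left-intersecting one forces $I(v_i)\cap I(v_{i+2})\neq\emptyset$), combined with the exclusion of interior nestings, yields the same monotonicity of endpoints and is, if anything, cleaner. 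Both arguments are sound for all indices $i\geq 2$.

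The endpoint obstacle you flag is genuine, and you should not expect to close it: the statement is false at $i=1$ as written. Take $I(v_1)=[2,4]$, $I(v_2)=[1,6]$, $I(v_3)=[5,8]$; then $(v_1,v_2,v_3)$ is a shortest path with $r(v_1)<r(v_3)$, yet $v_2$ contains $v_1$ rather than right-intersecting it, so no use of the hypothesis $r(v_1)<r(v_k)$ and no renormalization of the path can rescue the claim. The paper's own proof has the same hole: it explicitly exempts $v_1$ and $v_k$ from the ``no nested intervals'' claim and then never reconciles that exemption with the stated conclusion for the pair $(v_1,v_2)$. So you have not missed an idea that the paper possesses. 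What is actually used downstream is the weaker fact that either $v_2$ right-intersects $v_1$ or $v_1$ is nested in $v_2$ (with right-intersection at all interior indices); indeed, the first claim inside the proof of \Cref{lem:move} treats the case ``$I(w)$ contains $I(v)$'' separately before invoking the observation. Your interior argument, together with your explicit identification of the surviving endpoint configuration, establishes precisely that corrected statement.
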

\begin{proof}
  If for some $i \in \{1, \dots, k - 2\}$ $v_{i+2}$ intersects $v_i$, then we can create a shorter path from $v_1$ to $v_k$ by removing $v_{i+1}$ from $P$, contradicting $P$ being a shortest path.

  Suppose that for some $i \in \{1, \dots, k - 2\}$ it holds $r(v_{i+1}) < r(v_{i})$.
  Note that a shortest path contains no nested intervals with a possible exception of $v_1$ and $v_k$, as every other nested interval can be removed to make the path shorter.
  Thus $v_{i+1}$ left-intersects $v_i$.
  Let $v_j$ be the first next vertex after $v_i$ such that $r(v_i) < r(v_j)$.
  If none such exists, then $v_i$ must intersect $v_k$ and thus the path can be made shorter.
  Otherwise we show that $v_j$ intersects $v_i$.
  If it does not, then $\ell(v_j) > r(v_i)$.
  But for the path to be connected, another interval $v_a$ must cover $[r(v_i), \ell(v_j)]$.
  Such interval either has $r(v_j) < r(v_a)$, thus $v_j$ is nested in $v_a$ or $r(v_i) < r(v_a) < r(v_j)$, contradicting the choice of $v_j$.
\end{proof}

The following lemma shows that we can efficiently recompute the minimum-cost matching to ensure that for some token a valid move across a shortest path to its match will be available.

\begin{lemma}\label{lem:move}
  The call of \Call{FixMatching}{} on line \ref{line:fix-matching} returns a minimum-cost matching $M'$ between $D_s$ and $D_t$ such that at least one the following holds.
  \begin{itemize}
    \item There is $(u, v) \in M', u' \in \succr{u, v}$ such that $\slide{D_s}{u}{u'}$ is dominating,
    \item there is $(u, v) \in M', v' \in \succr{v, u}$ such that $\slide{D_t}{v}{v'}$ is dominating.
  \end{itemize}
\end{lemma}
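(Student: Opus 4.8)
The plan is to analyze the situation in which \textsc{ReconfIG} actually reaches line~\ref{line:fix-matching}, namely when the current minimum-cost matching $M$ admits no valid one-step slide on either side: for every pair $(u,g)\in M$ with $u\neq g$, sliding $u$ one step toward $g$ destroys domination of $D_s$, and symmetrically for $D_t$. By \Cref{obs:nice-matching} I may assume $M$ matches each vertex of $D_s\cap D_t$ to itself. The procedure selects $v\in D_s\triangle D_t$ with smallest $r(v)$, and by the symmetric recursion it suffices to treat $v\in D_s\setminus D_t$. First I would observe that the current match $v'=M(v)$ lies strictly to the right of $v$: since $v\notin D_t$ we have $v'\neq v$, niceness forces $v'\in D_t\setminus D_s$, so $v'$ lies in the symmetric difference and $r(v')>r(v)$ by minimality of $r(v)$. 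Hence the shortest path from $v$ to $v'$ runs rightward and, by \Cref{obs:sp}, has strictly increasing left and right endpoints.

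The heart of the argument — and the step I expect to be the main obstacle — is to show that the search on line~\ref{line:find-y} succeeds: that there is $y\in D_t\setminus D_s$ with $\slide{D_s}{v}{y}$ dominating, together with matches $y'=M(y),\,v'=M(v)$ for which the exchange $M'=M\setminus\{(v,v'),(y',y)\}\cup\{(v,y),(y',v')\}$ is again of minimum cost. I would start from why the one-step slide of $v$ toward $v'$ fails: writing $u'\in\succr{v,v'}$ for the first vertex of that path, some private neighbour $z$ of $v$ (a vertex dominated in $D_s$ only by $v$) is not in $N[u']$, and since the path increases to the right, every such lost $z$ lies to the left of $v$, with $r(z)<\ell(u')<r(v)$. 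Each such $z$ is still dominated in $D_t$; as $D_s\setminus\{v\}$ fails to dominate $z$ and $v\notin D_t$, it is dominated by a token of $D_t\setminus D_s$, which therefore reaches left of $v$ yet, belonging to the symmetric difference, has right endpoint beyond $r(v)$. Using the minimality of $r(v)$ I would argue that one such long token $y\in D_t\setminus D_s$ can be chosen dominating every private neighbour of $v$ at once, so that $\slide{D_s}{v}{y}$ is dominating. The second obligation, that the swap preserves cost, I would obtain from a routine alternating-exchange estimate $d_G(v,y)+d_G(y',v')\le d_G(v,v')+d_G(y',y)$ via the triangle inequality and the monotone path structure of \Cref{obs:sp}; equality is then forced because $M$ already has minimum cost. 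This coupling of a domination-preserving target with a cost-preserving rematch, driven entirely by the leftmost-right-endpoint choice of $v$, is the genuinely delicate point.

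Granting that \textsc{FixMatching} returns such an $M'$, I would finish by exhibiting the promised move. In $M'$ the pair $(v,y)$ is present with $r(y)>r(v)$, so the shortest path from $v$ to $y$ again has increasing endpoints by \Cref{obs:sp}; let $u'\in\succr{v,y}$ be its first vertex (with $u'=y$ when $v,y$ are adjacent, a case handled directly by domination of $\slide{D_s}{v}{y}$). It then suffices to check that $\slide{D_s}{v}{u'}$ is dominating. Any vertex not dominated by $D_s\setminus\{v\}$ is a private neighbour $z$ of $v$, and since $\slide{D_s}{v}{y}$ is dominating, $z$ intersects $y$; thus $\ell(z)\le r(v)<r(u')$ and $r(z)\ge\ell(y)\ge\ell(u')$, so $z$ intersects $u'$ and remains dominated. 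Hence the first bullet of the lemma holds. In the symmetric case $v\in D_t$, \textsc{FixMatching} recurses with the roles of $D_s,D_t$ (and $M^{-1}$) interchanged and returns $(M')^{-1}$, so the identical argument yields the second bullet.
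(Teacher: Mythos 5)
Your route is the same as the paper's: pick $v$ in the symmetric difference with smallest right endpoint, show the neighbours lost by the one-step slide all lie to the left and are simultaneously covered by one long token $y\in D_t\setminus D_s$ adjacent to $v$, and rematch via $M'=M\setminus\{(v,v'),(y',y)\}\cup\{(v,y),(y',v')\}$. The gap sits exactly at the step you flag as delicate and then wave through: the inequality $d_G(v,y)+d_G(v',y')\le d_G(v,v')+d_G(y,y')$ is \emph{not} a routine triangle-inequality exchange estimate. Since $d_G(v,y)=1$, you need $d_G(v',y')\le d_G(v,v')+d_G(y,y')-1$, whereas chaining triangle inequalities through $v$ and $y$ only yields $d_G(v',y')\le d_G(v,v')+d_G(y,y')+1$, which is too weak by two. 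The paper closes this by a case analysis on the second vertices $w_v\in\succr{v,v'}$ and $w_y\in\succr{y',y}$ of the two shortest paths: if they are adjacent, the two units are saved by routing $v'\rightsquigarrow w_v\rightarrow w_y\rightsquigarrow y'$; if they are not adjacent and $I(w_v)$ is nested in $I(y)$, one routes through $y$ itself, after separately ruling out $v'=w_v$ using the standing hypothesis that no slide of $D_t$ preserves domination; and the remaining case is shown to be impossible because the ordering of right endpoints it forces implies $r(y')<r(v)$, contradicting the extremal choice of $v$. Two of the three cases are thus contradictions driven by the no-valid-move assumption and the minimality of $r(v)$ — neither of which your sketch invokes at this step — so the cost claim as you state it is unproved, and the "monotone path structure" alone will not produce it.

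A secondary, repairable omission: your assertion that every lost $z$ satisfies $r(z)<\ell(u')$ presupposes that every $w\in\succM{v}$ right-intersects $I(v)$, which \Cref{obs:sp} does not give for the first edge of the path (and it requires $k\ge 3$). Excluding the case that $I(w)$ contains $I(v)$ or vice versa again uses the no-move hypothesis (a containing $w$ makes $\slide{D_s}{v}{w}$ dominating; a contained $w$ is forced by \Cref{obs:sp} to be the match itself, whose $D_t$-token could then slide onto $v$), and excluding left-intersection uses minimality of $r(v)$. These are the same two hypotheses needed above, so the fix is to make them explicit rather than to find a new idea; but as written both load-bearing steps are asserted rather than proved.
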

\begin{proof}
  \begin{figure}
    \centering
    \begin{subfigure}{.30\textwidth}
      \centering
      \includegraphics[page=1]{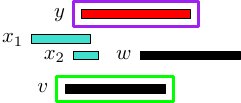}
      \caption{}\label{fig:intervals}
    \end{subfigure}
    \begin{subfigure}{.30\textwidth}
      \centering
      \includegraphics[page=2]{images/interval_graphs}
      \caption{}\label{fig:move2}
    \end{subfigure}
    \begin{subfigure}{.30\textwidth}
      \centering
      \includegraphics[page=3]{images/interval_graphs}
      \caption{}\label{fig:walk}
    \end{subfigure}
    \caption{Illustrations accompanying the proof of \Cref{lem:move}. The green squares denote tokens of $D_s$, the purple squares denote tokens of $D_t$.}
  \end{figure}
  The idea of the proof is in showing that if no token can move along a shortest path to its match, then there is always a way to modify the matching which does not increase cost and makes moving along a shortest path possible for at least one token.
  In particular, we need to show that the operation of finding $y$ on line \ref{line:find-y} always succeeds and the constructed $M'$ is a minimum-cost matching between $D_s$ and $D_t$.

  As the algorithm has not finished on line 2, it holds $D_s \neq D_t$.
  Let $M$ be a minimum-cost matching between $D_s$ and $D_t$.
  If for some $(u, v) \in M, \scr \in \succr{u, v}, \scr' \in \succr{v, u}$ $\slide{D_s}{u}{\scr}$ or $\slide{D_t}{v}{\scr'}$ is dominating, then we would not have reached line~\ref{line:fix-matching}.
  Therefore, assume that for every $(u, v) \in M, \scr \in \succr{u, v}, \scr' \in \succr{v, u}$ neither $\slide{D_s}{u}{\scr}$ nor $\slide{D_t}{v}{\scr'}$ is dominating.

  Let $(\toki, \toki') \in M$ such that $\toki \neq \toki'$ and $\min(r(\toki), r(\toki'))$ is minimum possible.
  Without loss of generality, assume that $r(\toki) < r(\toki')$ as otherwise, we can swap $D_s$ and $D_t$.

  \begin{claim}
    For every $\scr \in \succM{\toki}$, $\I{\scr}$ right-intersects $\I{\toki}$.
  \end{claim}
  \begin{claimproof}
    Suppose that $\I{\scr}$ contains $\I{\toki}$.
    Then $\slide{D_s}{\toki}{\scr}$ is dominating, a contradiction.
    Now suppose that $\I{\toki}$ contains $\I{\scr}$.
    Then by \Cref{obs:sp} it holds $(\toki, \scr) \in M$, which implies that $\toki \in \succr{\scr, \toki}$ and $\slide{D_t}{\scr}{\toki}$ is dominating as $N[\toki] \subseteq N[\scr]$, a contradiction.

    The remaining case is that $\I{\scr}$ left-intersects $\I{\toki}$.
    Then again, by \Cref{obs:sp} it holds $(\toki, \scr) \in M$ and this contradicts the choice of $\toki$ as $r(\scr) < r(\toki)$.
  \end{claimproof}

  Now, let $\scr \in \succM{\toki}$ be a fixed vertex and consider why $\slide{D_s}{\toki}{\scr} = D_s'$ is not dominating.
  Let $x_1, \dots, x_k \subset N(\toki) \setminus N(\scr)$ be the vertices that are not dominated by $D_s'$.
  \begin{claim}
    There exists $\toky \in N(\toki) \cap (D_t \setminus D_s)$ such that all $x_i$ are adjacent to $\toky$.
  \end{claim}
  \begin{claimproof}
  First, we will show that $\I{x_i}$ is to the left of $\I{\scr}$ for all $x_i$.
  Note that as each no $x_i$ is adjacent to $\scr$, $\I{x_i}$ is either to the left or to the right of $\I{\scr}$.

  Suppose there is some $\I{x_i}$ to the left of $\I{\scr}$ and some $\I{x_j}$ to the right of $\I{\scr}$, then $\I{\toki}$ contains $\I{\scr}$, which as previously argued may not be the case.
  The remaining case is that all $\I{x_i}$ are to the right of $\I{\scr}$, which would imply that $\I{\scr}$ left-intersects $\I{\toki}$, which again was shown not to hold.
  Therefore, each $\I{x_i}$ is to the left of $\I{\scr}$.
  This further implies that $\ell(\toki) < r(x_i) < \ell(\scr)$, thus each $\I{x_i}$ is either nested in $\I{\toki}$ or left-intersects $\I{\toki}$.
  
  Observe that each $x_i$ is adjacent to some $y_i \in D_t \setminus D_s$ and $r(\toki) < r(y_i)$ by the choice of $\toki$.
  Therefore, there exists $\toky \in D_t \setminus D_s$ such that $I(\toky)$ contains $\min(r(x_1), \dots, r(x_k))$.
  Together, we get
  \begin{equation}
    \ell(\toky) < \min(r(x_1), \dots, r(x_k)) \leq \max(r(x_1), \dots, r(x_k)) < \ell(\scr) < r(\toki) < r(\toky)
    \label{eq:ordering}
  \end{equation}
  and therefore $\toky$ is adjacent to all $x_i$.
  See \Cref{fig:intervals} for an illustration.
  As $\ell(\toky) < r(\toki) < r(\toky)$, $\I{\toky}$ either right-intersects $\I{\toki}$ or contains $\I{\toki}$ and thus $\toki$ and $\toky$ are adjacent.
  \end{claimproof}

  The rest of the proof consists of two claims.
  The first is that $\slide{D_s}{\toki}{\toky}$ is dominating.
  The second is that $(\toki, \toky) \in M'$ for some minimum-cost matching $M'$ between $D_s$ and $D_t$.

  \begin{claim}
    $\slide{D_s}{\toki}{\toky}$ is dominating.
  \end{claim}
  \begin{claimproof}
    Let $D' = \slide{D_s}{\toki}{\toky}$.
    If $\toky$ contains $\toki$, then $N[\toki] \subseteq N[\toky]$, therefore $D_s \subseteq D'_s$ and $D'_s$ is dominating.
    Thus assume that $\toky$ right-intersects $\toki$, which is the only remaining case as shown above.

    Suppose $u \in N(\toki)$ is a vertex which is not dominated from $D_s'$.
    Note that $u$ must not be adjacent to $\toky$ and at the same time be adjacent to $\toki$, therefore $u$ is to the left of $\toky$.
    Then $u$ is to the left of all $\scr \in \succM{\toki}$ as $\ell(\toky) < \ell(\scr)$, thus $u$ is not dominated in $\slide{D_s}{\toki}{w}$ and therefore $u = x_i$ for some $i$.
    This implies that $u$ is not adjacent to $\toky$ and this contradicts the choice of $\toky$.
  \end{claimproof}

    Let $\toki' \in D_t$ such that $\toki' \neq \toki$ and $(\toki, \toki') \in M$.
    Similarly, let $\toky' \in D_s$ such that $\toky' \neq \toky$ and $(\toky', \toky) \in M$.
    We define the new matching $M'$ as
    \[
      M' = \bigl(M \setminus \{(\toki, \toki'), (\toky', \toky)\}\bigr) \cup \{(\toki, \toky), (\toky', \toki')\}.
    \]
  \begin{claim}
    $M'$ is a minimum-cost matching between $D_s$ and $D_t$.
  \end{claim}
  \begin{claimproof}
    We prove that $c(M') \leq c(M)$.
    Given that $d(\toki, \toky) = 1$ it suffices to show that
    \begin{align*}
      d(\toki, \toky) + d(\toki', \toky') &\leq d(\toki, \toki') + d(\toky, \toky') \\
      d(\toki', \toky') &\leq d(\toki, \toki') + d(\toky, \toky') - 1.
    \end{align*}
    Let $\scr_\toki \in \succr{\toki, \toki'}$ and $\scr_\toky \in \succr{\toky', \toky}$.

    \subparagraph{Case 1: $\scr_\toki$ and $\scr_\toky$ are adjacent.}
    We can construct a walk $W$ from $\toki'$ to $\toky'$ by concatenating shortest paths between each two consecutive vertices in $(\toki',\scr_\toki,\scr_\toky,\toky')$.
    It holds that $d(\toki', \toky')$ is at most the number of edges of $W$ and therefore
    \begin{align*}
      d(\toki', \toky') &\leq d(\toki', \scr_\toki) + d(\scr_\toki, \scr_\toky) + d(\scr_\toky, \toky') \\
      &= d(\toki', \toki) - 1 + 1 + d(\toky, \toky') - 1 \\
      &= d(\toki, \toki') + d(\toky, \toky') - 1.
    \end{align*}
    \subparagraph{Case 2: $\scr_\toki$ and $\scr_\toky$ are not adjacent and $I(\scr_\toki)$ is nested in $I(\toky)$.}
    Suppose that $\toki' = \scr_\toki$.
    Given that $I(\scr_\toki)$ is nested in $I(\toky)$, it follows that $N[\toki'] \subseteq N[\toky]$ and thus as $v', y \in D_t$ we have that $D_t \setminus \{\toki'\}$ is dominating.
    Therefore, $\slide{D_t}{\toki'}{\toki}$ is dominating, a contradiction.
    See \Cref{fig:move2} for an illustration.

    Thus assume further that $\toki' \neq \scr_\toki$ and therefore $d(\toki, \toki') \geq 2$.
    Let $\scr^2_\toki \in \succr{\scr_\toki, \toki'}$.
    Note that $\scr^2_\toki$ must be adjacent to $\toky$ as $N[\scr_\toki] \subseteq N[\toky]$.
    See \Cref{fig:walk} for an illustration.
    We can construct a walk between $\toki'$ and $\toky'$ by concatenating shortest paths between each two consecutive vertices in $(\toky',\toky,\scr^2_\toki,\toki')$ of total length
    \[d(\toky', \toky) + 1 + d(\toki, \toki') - 2 = d(\toki, \toki') + d(\toky, \toky') - 1\]
    and therefore $d(\toki', \toky') \leq d(\toki, \toki') + d(\toky, \toky') - 1$.

    \subparagraph{Case 3: $\scr_\toki$ and $\scr_\toky$ are not adjacent and $I(\scr_\toki)$ is not nested in $I(\toky)$.}
    Recall that by Equation~\eqref{eq:ordering} it holds $\ell(\toky) < \ell(\scr_\toki)$.
    Furthermore, $r(\toky) < r(\scr_\toki)$, as otherwise $I(\scr_\toki)$ would be nested in $I(\toky)$.

    Let us now consider the possible orderings of the right endpoints of $I(\toki), I(y), I(\scr_\toki), I(\scr_\toky)$.
    The possibilities are restricted by the fact that by Equation~\eqref{eq:ordering} it holds $r(\toki) < r(\toky) < r(\scr_\toki)$, thus there remain $4$ possible orderings.
    The case $r(\toki) < r(\toky) < r(\scr_\toki) < r(\scr_\toky)$ can be ruled out as it contradicts $I(y)$ and $I(\scr_\toky)$ intersecting and $I(\scr_\toki)$ and $I(\scr_\toky)$ not intersecting at the same time.
    Similarly $r(\toki) < r(\toky) < r(\scr_\toky) < r(\scr_\toki)$ and $r(\toki) < r(\scr_\toky) < r(\toky) < r(\scr_\toki)$ is not possible as it would contradict $I(\toki)$ and $I(\scr_\toki)$ intersecting and at the same time $I(\scr_\toki)$ and $I(\scr_\toky)$ not intersecting.

    Thus, the only remaining ordering is $r(\scr_\toky) < r(\toki) < r(\toky) < r(\scr_\toki)$.
    This by \Cref{obs:sp} implies that either $\scr_{\toky} = \toky'$ or $r(\toky') < r(\toky)$.
    In either case, it follows that $r(\toky') < r(\toki)$ which contradicts the choice of $\toki$.
  \end{claimproof}
  We have shown that for any two dominating sets $D_s \neq D_t$ and a minimum-cost matching $M$ between them, we can construct another minimum matching $M'$ such that at least one of the following statements holds.
  There exists either $\toki \in D_s$ and $\toky \in D_t$ such that $(\toki, \toky) \in M'$ and $\slide{D_s}{\toki}{\toky}$ is dominating or,
  by a symmetric proof with $D_s$ and $D_t$ swapped, there exists $\toki \in D_t, \toky \in D_s$ such that $(\toky, \toki) \in M'$ and $\slide{D_t}{\toki}{\toky}$ is dominating.
  In either case, we have shown that $\toki$ and $\toky$ can be adjacent and thus $\toky \in \succr{\toki, \toky}$.
  Furthermore, $M'$ is constructed as described on~\ref{line:fix-matching} and $y$ can be found by testing all vertices in $D_t$.
  This concludes the proof.
\end{proof}

\begin{theorem}\label{thm:interval-graphs}
  Let $G$ be an interval graph with $n$ vertices and $D_s, D_t$ its two dominating sets such that $|D_s| = |D_t|$.
  Then \Call{ReconfIG}{} correctly computes a solution to \textsc{Shortest reconfiguration of dominating sets under \TS} in time $\calO(n^3)$, where $k$ is the size of the output.
\end{theorem}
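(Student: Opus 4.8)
The plan is to prove correctness by induction on $c^*(D_s, D_t)$, leaning almost entirely on \Cref{lem:move} and \Cref{obs:matching-move}, and then to handle the running time separately. The conceptually hard part — that a domination-preserving slide along a shortest path to a match is always eventually available — has already been established in \Cref{lem:move}, so the theorem mainly assembles the existing pieces and accounts for the cost. The base case $c^*(D_s, D_t) = 0$ coincides with $D_s = D_t$, which line~2 handles by returning $\emptyset$.

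For the inductive step I would argue that each recursive invocation performs exactly one token slide and that this slide decreases $c^*$ by exactly one while preserving feasibility. Both move branches fire only when the resulting configuration has been verified to be dominating, so every intermediate set is a dominating set. Because the moved token travels to a vertex $u' \in \succr{u, v}$ on a shortest path to its match $v$, \Cref{obs:matching-move} guarantees both that $c^*$ drops by exactly one and that $\bigl(M \setminus \{(u,v)\}\bigr) \cup \{(u', v)\}$ is a minimum-cost matching for the new pair; this is the matching I would thread into the subsequent call so that no matching is recomputed from scratch after a move. The $D_t$-side branch is symmetric, using reversibility of the moves (the reverse slide $(v' \rightarrow v)$ is appended at the end) together with the symmetric version of \Cref{obs:matching-move}. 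The only remaining concern is the \Call{FixMatching}{} branch, which leaves $D_s$ and $D_t$ unchanged and therefore outputs no move: here \Cref{lem:move} is exactly what is needed, since it certifies that the returned $M'$ admits a valid move, so at most one \Call{FixMatching}{} call occurs between consecutive moves and the recursion cannot stall. By induction the algorithm terminates after exactly $c^*(D_s, D_t)$ moves and outputs a genuine reconfiguration sequence.

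Optimality is then immediate: the produced sequence has length $c^*(D_s, D_t)$, and by \Cref{lem:lb} this is a lower bound on the length of any reconfiguration sequence between $D_s$ and $D_t$, so the output is a shortest one.

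The real remaining work, and the main obstacle, is the running time. I would first bound the number of moves by $c^*(D_s, D_t) \le |D_s| \cdot \mathrm{diam}(G) = O(n^2)$, and the number of \Call{FixMatching}{} calls by the number of moves plus one (each \Call{FixMatching}{} is immediately followed by a move, by \Cref{lem:move}); the output size $k$ is likewise $O(n^2)$. The initial minimum-cost matching can be computed in $O(n^3)$ after precomputing all pairwise distances by a BFS from each vertex. The delicate step is showing that each iteration costs only $O(n)$ amortized: maintaining the matching incrementally through the explicit swaps of \Cref{obs:matching-move} and \Call{FixMatching}{} rather than recomputing it, keeping for every vertex a count of how many tokens dominate it so that testing whether a single slide preserves domination is cheap, and exploiting that \Call{FixMatching}{} already returns the specific pair $(\toki, \toky)$ enabling the next move so that no expensive search over all matched pairs and their shortest-path successors is required. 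Combining an $O(n^2)$ bound on the iterations with an $O(n)$ amortized per-iteration cost yields the claimed $O(n^3)$ total; verifying that amortized bound and the supporting data-structure bookkeeping is where the bulk of the effort lies, since the clean structural facts about interval graphs are already captured by \Cref{obs:sp} and \Cref{lem:move}.
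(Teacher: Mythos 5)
Your proposal follows essentially the same route as the paper's proof: induction on $c^*(D_s,D_t)$ using \Cref{lem:move} to guarantee a valid move after at most one \Call{FixMatching}{} call, \Cref{obs:matching-move} to show each slide decreases $c^*$ by exactly one, \Cref{lem:lb} for optimality, and an $\calO(n^2)$ bound on the number of moves combined with linear-time move checking (plus the initial $\calO(n^3)$ matching computation) for the running time. The extra detail you give on amortizing the per-iteration cost is a reasonable elaboration of what the paper states more tersely, but it is not a different argument.
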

\begin{proof}
We first show that the resulting reconfiguration sequence has the shortest possible length.
\begin{claim}
  The number of moves outputted by \Call{ReconfIG}{} is $d_\mathcal{R}(D_s, D_t)$
\end{claim}
\begin{claimproof}
  We will show that $d_\mathcal{R}(D_s, D_t) = c^*(D_s, D_t)$ by induction over $c^*(D_s, D_t)$.
  If $c^*(D_s, D_t) = 0$, then $D_s = D_t$ and $d_\mathcal{R}(D_s, D_t) = 0$, which we can efficiently recognize.

  Suppose that $c^*(D_s, D_t) > 0$.
  Let $M$ be a minimum-cost matching between $D_s$ and $D_t$.
  Without loss of generality, let $(u, v) \in M', u' \in \succr{u, v}$ such that $D'_s = \slide{D_s}{u}{u'}$ is dominating.
  By \Cref{lem:move}, either such $u, u', v$ already exist or we can recompute $M'$ so that they exist.

  Note that by \Cref{obs:matching-move}, $c^*(D'_s, D_t) = c^*(D_s, D_t) - 1$ and thus by the induction hypothesis $d_\mathcal{R}(D'_s, D_t) = c^*(D'_s, D_t)$.
  Note that $d_\mathcal{R}(D_s, D_t) \leq d_\mathcal{R}(D'_s, D_t) + 1$ as $D_s$ can be reached from $D_s$ by a single token slide.
  At the same time, by \Cref{lem:lb}, it holds $d_\mathcal{R}(D_s, D_t) \geq c^*(D_s, D_t) = c^*(D'_s, D_t) + 1 = d_\mathcal{R}(D'_s, D_t) + 1$.
  Thus $d_\mathcal{R}(D_s, D_t) = d_\mathcal{R}(D'_s, D_t)$ and with each output of a token slide, we decrease the distance in $d_\mathcal{R}$ by exactly one.
  Therefore, the resulting reconfiguration is shortest possible.
\end{claimproof}

\begin{claim}
  \Call{ReconfIG}{} can be implemented to run in time $\calO(n^3)$.
\end{claim}
\begin{claimproof}
  We initially compute a minimum-cost matching between $D_s$ and $D_t$ by reducing to minimum-cost matching in bipartite graphs, which can be solved in~$\mathcal{O}(n^3)$~\cite{Jonker_1987}.
  
  Now, we describe how to implement \Cref{alg:intervals} efficiently.
  If we want to find a suitable $v$ in \Call{FixMatching}{}, we suppose that all greedy moves, i.e. moves along shortest paths to matches that result in a dominating set, have been done.
  This is not necessary, we can see that the assumption is invoked only on constantly many vertices for each call of \Call{FixMatching}{}.
  Checking if a greedy move can be performed requires only linear time and the total number of moves is at most $\mathcal{O}(n^2)$, thus the total running time is $\mathcal{O}(n^3)$.
\end{claimproof}
\end{proof}

\section{Conclusion}

\begin{figure}
\begin{center}
  \begin{tikzpicture}
    \node[draw,circle] (a) at (0:0) {};
    \node[draw,circle] (a2) at (90:2) {};
    \node[draw,circle] (a5) at (270:2) {};

    \node[draw,blue,thick,fill] (a1) at (30:2) {};
    \node[draw,blue,thick,fill] (a4) at (210:2) {};
    \node[draw,red,thick,fill] (a3) at (150:2) {};
    \node[draw,red,thick,fill] (a6) at (330:2) {};
    \foreach \i in {1,2,...,6}{
      \draw (a)--(a\i);
    }
    
    \foreach \i in {1,2,...,5}{
      \pgfmathtruncatemacro{\y}{\i + 1};
      \draw (a\y)--(a\i);
    }
    \draw (a1)--(a6);

  \end{tikzpicture}
  \caption{Dually chordal graph where the lower bound from minimal matching is not achievable.
  The minimum-cost matching between the red and the blue vertices is $2$ but to reconfigure one into the other, we need at least $3$ moves.}\label{fig:dually-chordal}
\end{center}
\end{figure}
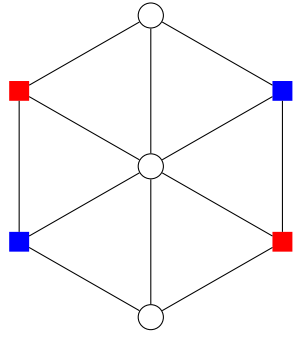

In this paper, we have presented polynomial algorithms for finding a shortest reconfiguration sequence between dominating sets on trees and interval graphs, addressing the open question left by Bonamy et al.~\cite{Bonamy_2021}.
Their work provided an efficient algorithm for finding a reconfiguration sequence between two dominating sets in dually chordal graphs, which include trees and interval graphs as subclasses.
We have shown that in case of trees and interval graphs, we can always match the lower bound of \Cref{lem:lb}.
That is not the case for dually chordal graph in general, see \Cref{fig:dually-chordal}.

While our work contributes to the understanding of reconfiguration problems in trees and interval graphs, the general case of dually chordal graphs remains open.
Additionally, the case of cographs is still open, and we conjecture that a polynomial-time solution is achievable.

It would be interesting to find a class of graphs for which in the case of dominating sets, the optimization variant is NP-hard while the reachability variant is polynomial-time solvable.
Furthermore, it would be intriguing to provide a polynomial-time algorithm for the optimization variant in a class of graphs that may require ``detour''.


\bibliographystyle{plainurl}
\bibliography{refs}


\end{document}